\newtheorem{Theorem}{Theorem}
\newtheorem{Remark}{Remark}
\title{Simultaneous Signal-and-Interference Alignment for Two-Cell Over-the-Air Computation}
\author{ {Qiao~Lan,}  {Hyo Seung~Kang,} and {Kaibin Huang}  
\thanks{Q. Lan, H. S. Kang, and K. Huang are with The University of Hong Kong, Hong Kong. Contact: K. Huang (huangkb@eee.hku.hk).}
}
\newcommand{\removelatexerror}{\let\@latex@error\@gobble}
\begin{document}

\maketitle

\begin{abstract}
The next-generation wireless networks are envisioned to support large-scale sensing and distributed machine learning, thereby enabling new intelligent mobile applications. One common network operation will be the aggregation of distributed data (such as sensor observations or AI-model updates) for functional computation (e.g., averaging) so as to support large-scale sensing and distributed machine learning. An efficient solution for data aggregation, called “over-the-air computation” (AirComp), embeds functional computation into simultaneous access by many edge devices. Such schemes exploit the waveform superposition of a multi-access channel to allow an access point to receive a desired function of simultaneous signals. In this work, we aim at realizing AirComp in a two-cell multi-antenna system. To this end, a novel scheme of simultaneous signal-and-interference alignment (SIA) is proposed that builds on classic IA to manage interference for multi-cell AirComp. The principle of SIA is to divide the spatial channel space into two subspaces with equal dimensions: one for signal alignment required by AirComp and the other for inter-cell IA. As a result, the number of interference-free spatially multiplexed functional streams received by each AP is maximized (equal to half of the available spatial degrees-of-freedom). Furthermore, the number is independent of the population of devices in each cell. In addition, the extension to SIA for more than two cells is discussed.

\end{abstract}

\section{Introduction}\label{sec:intro}

Internet-of-Things (IoT) is expected to connect billions of edge devices (smartphones and sensors), which will generate enormous data distributed at the network edge. The data can be leveraged for sensing and edge machine learning to support a wide range of edge-intelligence applications ranging from smart cities to autonomous driving. Functional computation is typical in data processing, i.e., the averaging of distributed sensing data \cite{Gastpar2008TIT},  or the averaging of stochastic gradients computed at devices for updating a global AI model \cite{Wang2018Infocom,Zhu2019IOTJ}. One challenge faced by the edge-intelligence applications is the long multiple-access latency when there are many devices.

A class of techniques that integrate multiple-access and functional computation, called \emph{over-the-air computation} (AirComp), provide a promising solution. AirComp is a simultaneous-access scheme that enables an \emph{access point} (AP) to receive a functional value of analog modulated symbols simultaneously transmitted by devices \cite{Zhu2019IOTJ,Yang2019ICC,Wen2019TWC}. This is realized by exploiting the waveform superposition property of a multi-access channel. From the signal processing perspective, a key feature of AirComp schemes is \emph{signal alignment} (SA), referring to aligning the received magnitudes of multiuser signals to facilitate functional computation. The idea of AirComp for sensing was first proposed in \cite{Gastpar2008TIT} and proved therein to be optimal from the information theoretic perspective for the case of independent Gaussian sources. Motivated by the next-generation edge intelligence applications, more sophisticated AirComp techniques have been developed recently \cite{Zhu2019IOTJ,Yang2019ICC,Wen2019TWC}. For multi-antenna sensor networks, beamforming algorithms have been proposed to support spatially multiplexed multi-functional computation targeting multi-modal sensing at high mobility \cite{Zhu2019IOTJ,Wen2019TWC}. On the other hand, it is proposed in \cite{Yang2019ICC,Zhu2019TWC,Gunduz2019Arxiv} that AirComp can be applied to reduce the latency of local-model aggregation in federated learning, a popular paradigm of distributed edge learning. The prior work focuses on single-cell systems that are free of inter-cell interference. Managing such interference is important for multi-cell AirComp, which is an open area and motivates the current work.

For interference management, there exist a class of classic schemes, called interference alignment (IA), for achieving the capacity of an interference channel at high \emph{signal-to-noise ratios} (SNRs) \cite{Jafar2008TIT}. The schemes have been widely adopted in different types of practical systems \cite{Zhao2016CommSurvey}. The basic principle of IA is to align the interference from multiple sources into the same subspace so as to minimize their total dimensions and thereby maximize the interference-free dimensions for desired signals. The most relevant to this work are IA schemes for cellular networks as designed in \cite{Tse2008Allerton,Tse2011TCOM,Lee2016TVT}. The IA schemes (available for both uplink and downlink) share a common principle that can be described in the context of uplink as follows. The \emph{multiple-input multiple-output} (MIMO) channel space is divided into two subspaces, interference and signal subspaces, for the following operations: 
\begin{itemize}
\item[1)] \textbf{IA:} IA precoding is applied at each user so that the inter-cell interference is constrained within the interference subspace;
\item[2)] \textbf{Multiuser detection (MUD):} The AP cancels the aligned interference and decouples the multiuser data streams lying within the signal subspace by zero forcing.
\end{itemize}

The optimal uplink IA schemes are known only for two-cell systems \cite{Tse2011TCOM,Tse2008Allerton} while some sub-optimal designs for multi-cell systems are available \cite{Ma2012CommLetter,Lee2016TVT}. These optimal schemes achieve that the array size at each node (AP or user), $M$, for supporting $N_{\text{pu}}$ signal streams per user and $K$ users per cell is required to have the size $M$ given as
\begin{equation}
\label{eqn:size_convenIA}
M = N_{\text{pu}}(K+1).
\end{equation}
For many users ($K\rightarrow\infty$), the required array size $M$ approaches that for a single-cell system free of inter-cell interference, namely $M=N_{\text{pu}}K$.

In this letter, building on the existing work, we propose a novel scheme called simultaneous \emph{signal-and-interference alignment} (SIA) for two-cell AirComp systems. Compared with the IA principle, SIA has a different one as follows. Unlike asymmetric space partitioning for IA, the MIMO channel space is divided into two subspaces with symmetry in \emph{dimensionality} with proven optimality. Their purposes are as follows. 
\begin{itemize}
\item[1)] \textbf{IA:} The IA operation in the interference subspace is identical to that in the conventional schemes;
\item[2)] \textbf{Signal Alignment (SA):} Following IA precoding, each device inverses the reduced-dimension effective MIMO channel by zero-forcing precoding to achieve SA within the allocated signal subspace.
\end{itemize}
The proposed SIA scheme achieves that the array size at each node (AP or device), $M$, for supporting $N_{\text{ac}}$ spatial AirComputed  functional streams per cell, is required to have the size given as 
\begin{equation}\boxed{
M = 2\times N_{\text{ac}}.}
\label{eqn:size_theSIA}
\end{equation}
Comparing (\ref{eqn:size_convenIA}) and (\ref{eqn:size_theSIA}), one can observe that unlike the conventional IA, the required array size for SIA is independent of the number of devices. The difference should not be interpreted in terms of performance comparison since they are two different applications. The former is multi-access for rate maximization, and the latter is multi-access for functional computation. It is via the novel SIA design for a new application that the current work advances the area of interference management. The details design is presented and its performance is evaluated in the remainder of this letter.

\section{System Model}
\label{sec:system_model}

\begin{figure}
\centering
\includegraphics[scale=0.3]{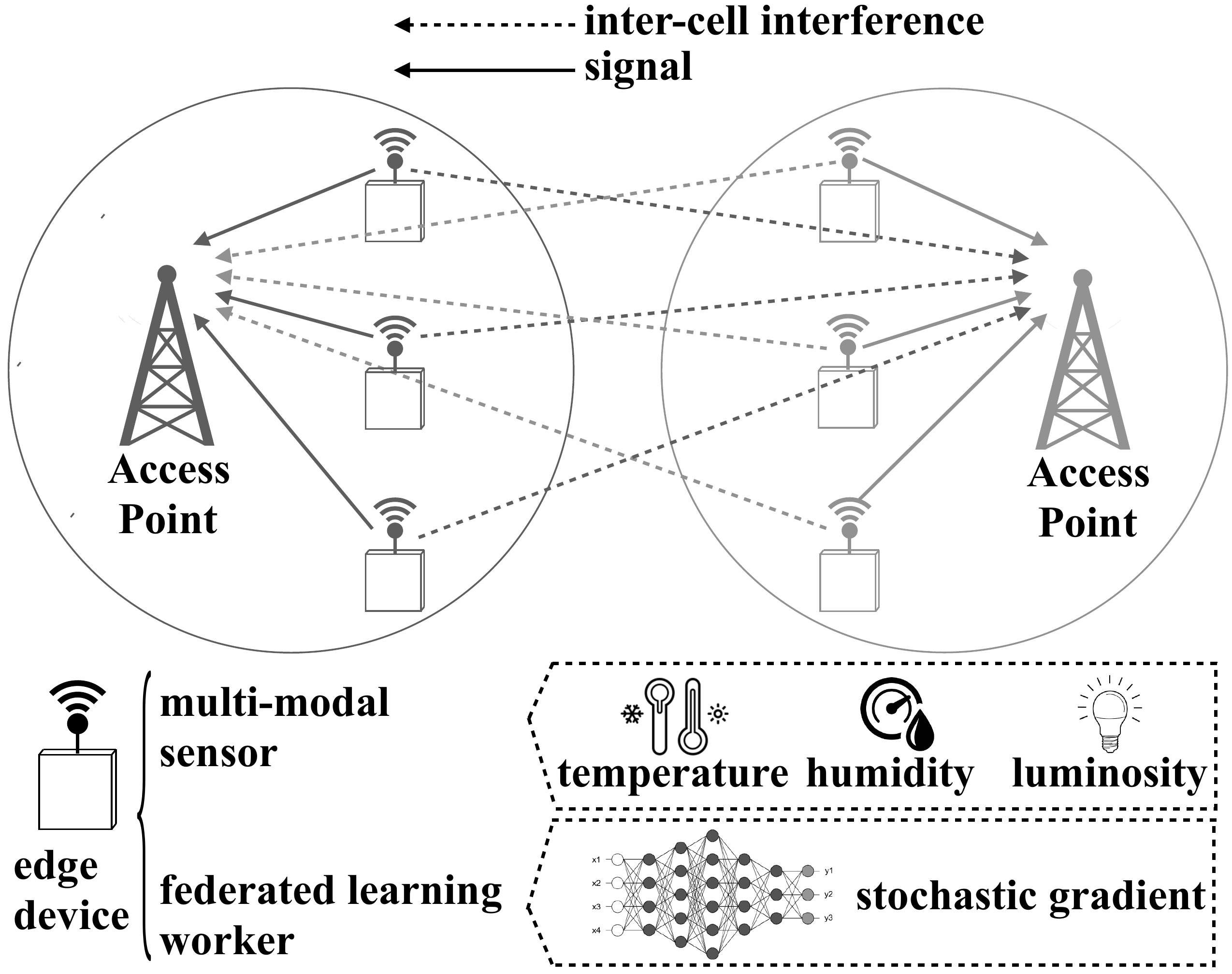}
\caption{{A two-cell AirComp system for data aggregation.} }
\label{fig: networkmodel}
\end{figure}

Consider the two-cell system for data aggregation as depicted in Fig. \ref{fig: networkmodel}, where there are $K$ edge devices in each cell. Each node (device or AP) is equipped with an $M$-antenna array for supporting MIMO AirComp described as follows \cite{Zhu2019IOTJ}. Consider an arbitrary symbol duration. Each device transmits an $N_{\text{ac}}\times 1$ vector symbol over the array after SIA precoding. The simultaneous transmission of all devices allows the AP to receive an $M\times 1$ vector symbol, from which an $N_{\text{ac}}\times 1$ vector of functional values is retrieved after inter-cell interference suppression. 
Let $i,j\in\{1,2\}$, {$i \neq j$}, be the indexes of cells, the $k$-th edge device of the $i$-th cell be labeled as the $(k,i)$-th device with $k\in\{1,2, ..., K\}$. Let $\mathbf{W}_{k,i} \in \mathbb{C}^{M\times N_{\text{ac}}}$ be the transmit precoder of the $(k,i)$-th device. The channel matrices from the $(k,i)$-th device to the $i$-th AP (the connected AP) and the $j$-th AP (the neighboring AP) are denoted as $\mathbf{H}_{k,i}$ and $\mathbf{G}_{k,i}$, respectively. It's assumed that channels are generic. Consider channel reciprocity, local \emph{channel-state information at transmitters} (CSIT) is assumed to be available [i.e., the $(k,i)$-th device perfectly knows $\mathbf{H}_{k,i}$ and $\mathbf{G}_{k,i}$] \cite{Tse2011TCOM}. Moreover, the $M\times 1$ vector received by the $i$-th AP is given by 
\begin{equation}
\widetilde{\mathbf{y}}_{i} = \sum_{k=1}^{K} \mathbf{H}_{k,i}\mathbf{W}_{k,i}\mathbf{x}_{k,i} + \sum_{k=1}^{K} \mathbf{G}_{k,j}\mathbf{W}_{k,j}\mathbf{x}_{k,j} + \mathbf{n}_i,
\end{equation}
where $\mathbf{x}_{k,i}$ is the vector symbol of the $(k,i)$-th device and $\mathbf{n}_i$ is the Gaussian noise. Each AP has an $N_{\text{ac}}$-by-$M$ aggregation beamformer $\mathbf{A}_i$ to recover $N_{\text{ac}}$ AirComp streams as $\widehat{\mathbf{y}}_i=\mathbf{A}_i\widetilde{\mathbf{y}}_i$. The objective of SIA design is to receive $N_{\text{ac}}$ interference-free aggregated streams as follows:
\begin{equation}
\label{eqn:aircomp_goal}
\mathbf{y}_i = \sum_{k=1}^{K}\mathbf{x}_{k,i},
\end{equation}	
such that the antenna array size $M$ does not scale with $K$. This supports AirComp in a dense network. Given the above objective, $N_{\text{ac}}$ is suitably referred to as the number of \emph{AirComp DoF}.

Given the AirComp operation in (\ref{eqn:aircomp_goal}), applying suitable pre-processing of the transmit vector symbol $\left\{\mathbf{x}_{k,i}\right\}$ makes it possible to compute different functions for different AirComp streams. The feasible functions form a class called \emph{nomographic functions} (e.g., averaging and geometric means). The said operations are well-known in the literature (see e.g., \cite{Mao2017CommSurvey,Zhu2019IOTJ}) and omitted here for brevity.

\begin{Remark}[Insufficiency of Only IA]
\label{remark:det}
The conventional ``transmit-then-compute" approach can apply an existing IA scheme to support two-cell data uploading, followed by function computation at APs. Define the communication efficiency as the number of functional values obtained per cell per symbol duration over the array size. According to (\ref{eqn:size_convenIA}), the efficiency of the conventional approach is $\frac{N_{\text{pu}}}{M}=\frac{1}{K+1}$ that diminishes as $K$ grows. Therefore, the approach is impractical for data aggregation over numerous devices. In contrast, SIA is a promising solution as discussed in Remark 4. 
\end{Remark}

\section{SIA Design and Performance}
\label{sec:prel}

\begin{figure}
\centering
\includegraphics[scale=0.3]{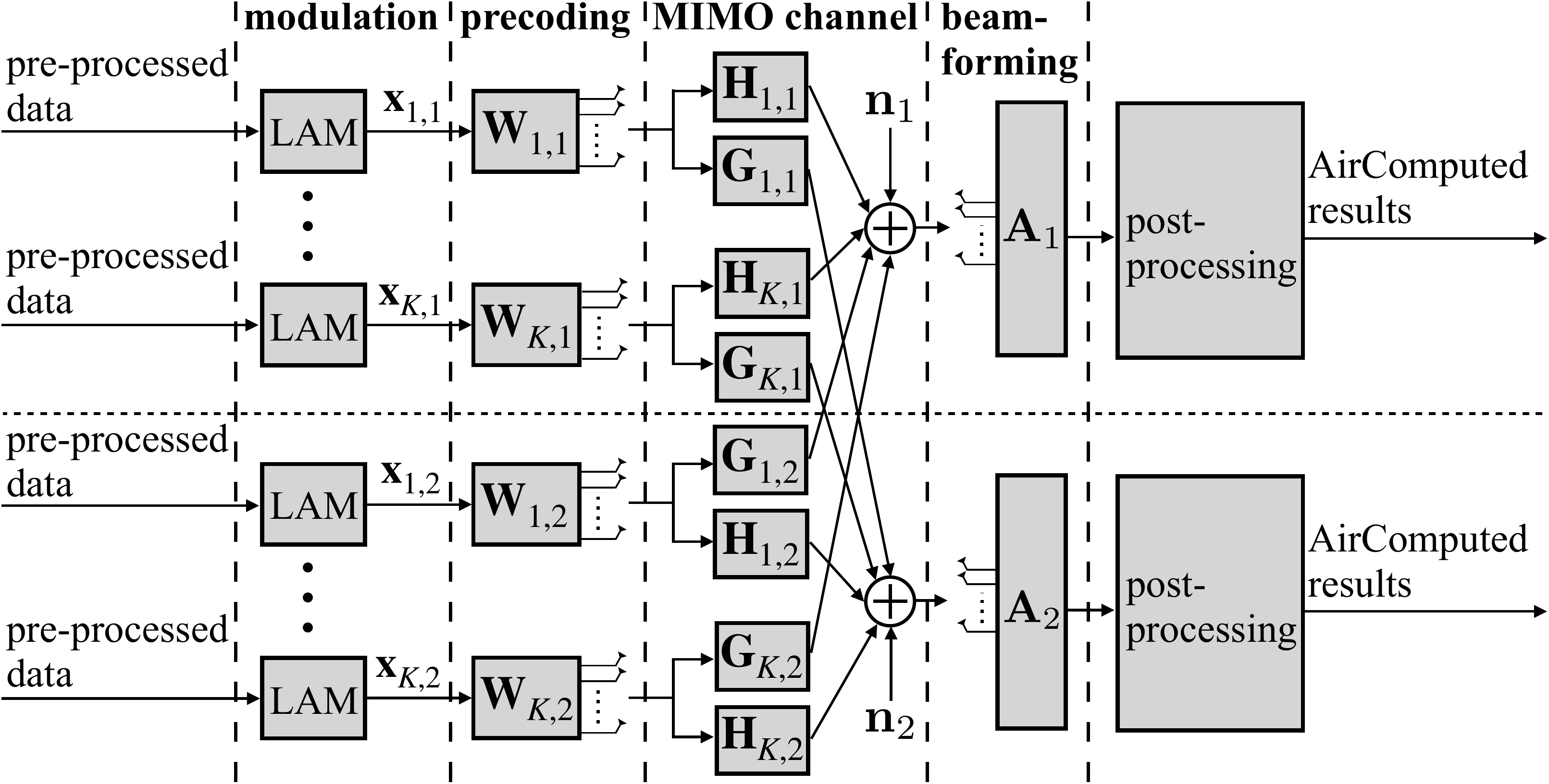}
\caption{SIA operations in a two-cell AirComp system, where LAM is the abbreviation of linear analog modulation. }
\label{fig:model_scheme}
\end{figure}

\subsection{SIA Scheme}

The system operations in the SIA scheme are shown in Fig. \ref{fig:model_scheme} and designed as follows. Each precoder, say $\mathbf{W}_{k,i}$ at the $(k,i)$-th device, comprises three cascaded components: $\mathbf{W}_{k,i}=\mathbf{W}_{k,i}^{(a)}\mathbf{W}_{k,i}^{(b)}\mathbf{W}_{k,i}^{(c)}$. The first two are designed to align inter-cell interference at the neighboring AP while the last {one} aligns signals at the target AP. Moreover, the aggregation beamformers at APs are designed to null out aligned interference and retrieve AirComp streams.  

\textbf{1) IA:} The IA precoder, namely $\{\mathbf{W}_{k,i}^{(a)}\}$, for the SIA scheme is designed similarly as that in \cite{Tse2011TCOM}. The key difference lies in the partitioning of the spatial channel space for IA and signal transmission (see Remark 2). Let the $M$-dimensional channel space at each AP be equally partitioned into an $N_{\text{ac}}$-dimensional subspace for SA and an $N^{\prime}$-dimensional subspace for IA. If $M$ is an even number, $N_{\text{ac}}=N^{\prime}=\frac{M}{2}$; {if $M$ is odd, $N_{\text{ac}}=\frac{M-1}{2}$ and $N^{\prime}=\frac{M+1}{2}$}. Consider the precoding at the $(k,i)$-th device. The first precoder component inverses the interference channel to the neighboring AP, namely $\mathbf{W}^{(a)}_{k,i}=\left(\mathbf{G}_{k,i}\right)^{-1}$. The second component is an $M$-by-$N^{\prime}$ {full-rank} reference matrix $\mathbf{W}^{(b)}_{k,i}=\mathbf{B}_i$, fixed for all devices in the $i$-th cell. By the above two-step precoding, all the interference from one cell, say the $i$-th cell, to the AP at the other cell, say the $j$-th cell, is aligned to the $N^{\prime}$-dimensional column subspace of the $M$-dimensional channel space observed by the AP:
\begin{eqnarray}
\dim\Bigg\{\text{span}\Bigg\{\mathbf{G}_{1,i}\left(\mathbf{G}_{1,i}\right)^{-1}\mathbf{B}_i \ \ \  \ \mathbf{G}_{2,i}\left(\mathbf{G}_{2,i}\right)^{-1}\mathbf{B}_i \ \ ... \ \  \mathbf{G}_{K,i}\left(\mathbf{G}_{K,i}\right)^{-1}\mathbf{B}_i\Bigg\}\Bigg\}={N}^{\prime}.&&\nonumber
\end{eqnarray}

Consider the design of aggregation beamformers at APs. To null out aligned inter-cell interference, the beamformer matrix $\mathbf{A}_j$ is chosen as $\mathbf{A}_j \perp \mathbf{B}_i$, or equivalently
\begin{equation}
\label{eqn:aggregation_beamformer}
{\mathbf{A}_j} \in \mbox{null}\left(\mathbf{B}_i\right),
\end{equation}
where $\text{null}(\cdot)$ represents the null space. 
Note that such an aggregation beamformer always exists since $M-N_{\text{ac}}= N^{\prime}$.

\begin{Remark}\label{Re:DoF}
Consider conventional IA schemes designed for sum rate maximization. The partitioning of the $M$-dimensional channel space at each AP for IA and data transmission is highly asymmetric. To be specific, the signal subspace has $\frac{KM}{K+1}$ dimensions and the interference subspace has $\frac{M}{K+1}$ dimensions. The subspace partitioning for AirComp is symmetric (or almost so) with subspaces having $\frac{M}{2}$ dimensions if $M$ is even or otherwise differing only by one dimension (see Fig. \ref{fig:dimension_consume}). Such partitioning is shown to be optimal in the sequel.
\end{Remark}

\textbf{2) SA:} SA or IA concerns the high-SNR regime where DoF maximization is the criterion \cite{Tse2011TCOM}. In this regime, the $N_{\text{ac}}\times M$ aggregation beamformer $\{\mathbf{A}_j\}$ can be chosen arbitrarily under the constraint $\mathbf{A}_j \perp \mathbf{B}_i$, and that $\mathbf{A}_j$ is of full column rank. The precoder components, $\{\mathbf{W}_{k,i}^{(c)}\}$ intended for SA, are designed as follows. Given $ \mathbf{A}_i $ and IA reference matrix $\mathbf{B}_i$, the SA precoder component inverts the effective MIMO channel after SIA:
\begin{equation}
\label{eqn:precoder_3}
\mathbf{W}_{k,i}^{(c)} = \left( \mathbf{A}_i \mathbf{H}_{k,i} \left(\mathbf{G}_{k,i}\right)^{-1}\mathbf{B}_i \right)^{-1}.
\end{equation}
Combining the above precoding and receive beamforming design yields the SIA scheme that can achieve the desired $N_{\text{ac}}$ AirComp DoF as shown in the next subsection. By cascading the IA and SA components, the precoder at the $(k,i)$-th device is given as
\begin{equation}
\label{eqn:precoder_summery}
\mathbf{W}_{k,i}=	\left(\mathbf{G}_{k,i}\right)^{-1} \mathbf{B}_i \left( \mathbf{A}_i \mathbf{H}_{k,i} \left(\mathbf{G}_{k,i}\right)^{-1}\mathbf{B}_i \right)^{-1}.
\end{equation}

\begin{Remark}[SIA for Noisy Channels] The above SIA scheme targets the regime of  high SNRs where its performance is measured by AirComp DoF. The choices of precoding/beamforming matrices $\mathbf{A}_1$, $\mathbf{B}_1$, $\mathbf{A}_2$, and $\mathbf{B}_2$ are arbitrary so long as they are of full rank and satisfy the orthogonality  constraint in \eqref{eqn:aggregation_beamformer}. On the other hand, in the regime of noisy channels (moderate to low SNRs), besides AirComp DoF, the reliability (or cost) of each AirComp stream is measured by an additional metric such as expected error due to noise (or transmission power). Then optimizing the said matrices provides a mean of improving the reliability or reducing the cost of individual AirComp streams (see single-cell examples in \cite{Li2019TWC,Zhu2019IOTJ}). 
\end{Remark}

\subsection{Performance Analysis}
The proposed SIA scheme can achieve $N_{\text{ac}}$ AirComp DoF as shown below. The vector of retrieved $N_{\text{ac}}$ AirComp streams in the $i$-th cell is obtained as
\begin{eqnarray}
\label{eqn:main_proof}
\widehat{{\mathbf{y}}}_i &=& \mathbf{A}_i \widetilde{\mathbf{y}}_i \nonumber \\
&\overset{\text{(a)}}{=}&  \mathbf{A}_i\mathbf{n}_i+\mathbf{A}_i\sum_{k=1}^{K} {\mathbf{H}_{k,i}}\mathbf{W}_{k,i}\mathbf{x}_{k,i} + \mathbf{A}_i\sum_{k=1}^{K} {\mathbf{G}_{k,j}}\left({\mathbf{G}_{k,j}}\right)^{-1}\mathbf{B}_j\mathbf{W}_{k,j}^{(c)} \mathbf{x}_{k,j}  \nonumber \\
&\overset{\text{(b)}}{=}& \mathbf{A}_i\mathbf{n}_i+\mathbf{A}_i\sum_{k=1}^{K} \Bigg[{\mathbf{H}_{k,i}}\left(\mathbf{G}_{k,i}\right)^{-1} \mathbf{B}_i  \left( \mathbf{A}_i \mathbf{H}_{k,i} \left(\mathbf{G}_{k,i}\right)^{-1}\mathbf{B}_i \right)^{-1} \mathbf{x}_{k,i} \Bigg] \nonumber \\
&\overset{\text{(c)}}{=}& \sum_{k=1}^{K}\mathbf{x}_{k,i}+\mathbf{A}_i\mathbf{n}_i, 
\end{eqnarray}
where (a), (b), (c) substitute (\ref{eqn:aggregation_beamformer}), (\ref{eqn:precoder_3}) and (\ref{eqn:precoder_summery}), respectively. In the high-SNR regime [equivalent to $\text{Var}(\mathbf{n}_i)\rightarrow 0$], $\widehat{\mathbf{y}}_i \rightarrow {\mathbf{y}_i}$ in (\ref{eqn:aircomp_goal}) in probability, achieving $N_{\text{ac}}=\lfloor{\frac{M}{2}}\rfloor$ AirComp DoF, and giving the following main result.
\begin{figure}
\centering
\includegraphics[scale=0.3]{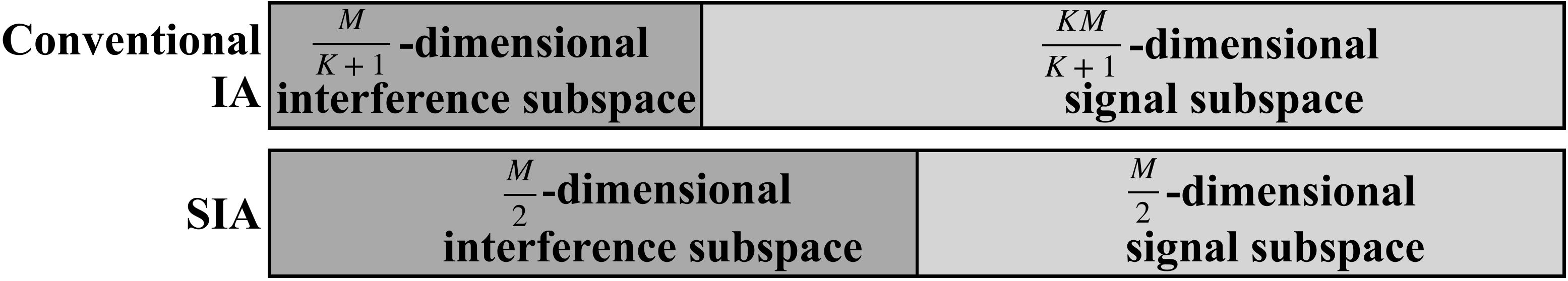}
\caption{Comparison in partitioning of the $M$-dimensional channel space between IA and SIA for even $M$.  }
\label{fig:dimension_consume}
\end{figure}

\begin{Theorem}[AirComp DoF for SIA]
Given $M$ antennas at each node, the SIA scheme achieves $N_{\text{ac}}=\lfloor{\frac{M}{2}}\rfloor$ AirComp DoF (interference-free AirComp streams) in each cell, which is independent of the number of edge devices in each cell.
\end{Theorem}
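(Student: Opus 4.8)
The plan is to establish the theorem by direct verification: I would substitute the cascaded precoder \eqref{eqn:precoder_summery} and the aggregation beamformer satisfying \eqref{eqn:aggregation_beamformer} into the received vector $\widetilde{\mathbf{y}}_i$, decompose the result into an inter-cell interference term and a desired-signal term, and show that the former is annihilated while the latter collapses to $\sum_{k}\mathbf{x}_{k,i}$, reproducing the chain of equalities (a)--(c) in \eqref{eqn:main_proof}. Before performing the cancellations, however, I would first confirm that every matrix inverse used in the construction is well defined, since this is where the genericity of the channels and the dimension bookkeeping actually enter.

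For well-definedness, I would argue in three parts. Since the channels are generic, each $M\times M$ matrix $\mathbf{G}_{k,i}$ is invertible almost surely, so the IA component $\mathbf{W}^{(a)}_{k,i}=(\mathbf{G}_{k,i})^{-1}$ exists. Next, the aggregation beamformer exists because requiring $\mathbf{A}_i\in\text{null}(\mathbf{B}_j)$ with $\mathbf{A}_i$ of full row rank $N_{\text{ac}}$ is feasible precisely when the left null space of the full-rank $M\times N^{\prime}$ matrix $\mathbf{B}_j$ has dimension $M-N^{\prime}=N_{\text{ac}}$, which holds by the partition identity $N_{\text{ac}}+N^{\prime}=M$. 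Finally, the SA component \eqref{eqn:precoder_3} requires $\mathbf{A}_i\mathbf{H}_{k,i}(\mathbf{G}_{k,i})^{-1}\mathbf{B}_i$ to be invertible; for generic $\mathbf{H}_{k,i},\mathbf{G}_{k,i}$ and full-rank $\mathbf{A}_i,\mathbf{B}_i$ this product attains its maximal rank $N_{\text{ac}}$ with probability one, yielding a genuine $N_{\text{ac}}\times N_{\text{ac}}$ inverse when $M$ is even (the odd case differing only by the single extra interference dimension).

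I would then dispose of the interference term. Substituting $\mathbf{W}_{k,j}=(\mathbf{G}_{k,j})^{-1}\mathbf{B}_j\mathbf{W}^{(c)}_{k,j}$ into $\sum_{k}\mathbf{G}_{k,j}\mathbf{W}_{k,j}\mathbf{x}_{k,j}$, each per-device factor $\mathbf{G}_{k,j}(\mathbf{G}_{k,j})^{-1}$ collapses to the identity, so the aggregate interference equals $\mathbf{B}_j\sum_{k}\mathbf{W}^{(c)}_{k,j}\mathbf{x}_{k,j}$, which lies in $\text{span}(\mathbf{B}_j)$ for every $k$ and hence for every value of $K$. Applying $\mathbf{A}_i$ and invoking $\mathbf{A}_i\mathbf{B}_j=\mathbf{0}$ from \eqref{eqn:aggregation_beamformer} then nulls the entire interference in one step. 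I expect this to be the conceptual crux of the argument: the channel-inverting precoder forces the images of all $K$ interferers into the \emph{same} $N^{\prime}$-dimensional subspace, so a fixed beamformer of dimension $N_{\text{ac}}=M-N^{\prime}$ suffices to cancel them regardless of the population size, which is exactly the source of the claimed independence from $K$.

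Finally I would simplify the desired-signal term. Substituting the full precoder \eqref{eqn:precoder_summery} into $\mathbf{A}_i\sum_{k}\mathbf{H}_{k,i}\mathbf{W}_{k,i}\mathbf{x}_{k,i}$, the per-device effective channel $\mathbf{A}_i\mathbf{H}_{k,i}(\mathbf{G}_{k,i})^{-1}\mathbf{B}_i$ is multiplied by its own inverse to give $\mathbf{I}_{N_{\text{ac}}}$, so each device contributes exactly $\mathbf{x}_{k,i}$ and the signals align to $\sum_{k}\mathbf{x}_{k,i}$, matching the AirComp goal \eqref{eqn:aircomp_goal}. Combining the two terms yields $\widehat{\mathbf{y}}_i=\sum_{k}\mathbf{x}_{k,i}+\mathbf{A}_i\mathbf{n}_i$; letting $\text{Var}(\mathbf{n}_i)\to 0$ in the high-SNR regime gives convergence in probability to $\mathbf{y}_i$, so $N_{\text{ac}}=\lfloor M/2\rfloor$ interference-free AirComp streams are achieved. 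Since $N_{\text{ac}}$ and $N^{\prime}$ are fixed by $M$ alone and the nulling step is insensitive to $K$, independence from the number of devices follows at once, completing the proof.
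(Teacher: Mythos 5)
Your proposal is correct and follows essentially the same route as the paper: it reproduces the substitution chain of \eqref{eqn:main_proof}, nulling the interference via $\mathbf{G}_{k,j}(\mathbf{G}_{k,j})^{-1}=\mathbf{I}$ together with $\mathbf{A}_i\perp\mathbf{B}_j$, and collapsing the signal term via the zero-forcing inverse in \eqref{eqn:precoder_3}. The only addition is your explicit check that the inverses are well defined under generic channels and the dimension count $N_{\text{ac}}+N^{\prime}=M$, which the paper leaves implicit.
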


Based on the above result, a remark on the efficiency of SIA for supporting data aggregation is provided below. 

\begin{Remark}[Communication Efficiency of SIA] The proposed SIA scheme achieves the communication efficiency (defined in Remark 1) $\frac{N_{\text{ac}}}{M}=\frac{1}{2}$ if $M$ is even or otherwise $\frac{N_{\text{ac}}}{M}=\left(\frac{1}{2}-\frac{1}{2M}\right)$, superior to the conventional approach. Because the number of spatial DoF required for SIA is independent of the number of edge devices. In other words, as the number of devices grows and the spatial DoF is fixed, the number of AirComp DoF that quantifies the system throughput remains unchanged. Thereby, SIA overcomes the limitation of the conventional IA schemes on supporting data aggregation in a dense network.
\end{Remark}

Next, a key feature of SIA is the symmetry of SA and IA whose optimality is given in the following theorem.

\begin{Theorem}[Optimality of Symmetric Space Partitioning] Consider the SIA scheme. Equal-dimension partitioning (see Remark~\ref{Re:DoF}) for signal and interference subspaces maximizes the minimum AirComp DoF of each cell.
\label{th:Optimal_symmetric}
\end{Theorem}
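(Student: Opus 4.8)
The plan is to recast the choice of the signal/interference partition as a small feasibility problem and show that the equal-dimension split is the unique maximizer of the weaker cell's AirComp DoF. I would allow the two cells to use possibly different partitions, writing $N_{\text{ac},i}$ for the dimension of the SA (signal) subspace and $N'_i=\operatorname{rank}(\mathbf{B}_i)$ for that of the IA (interference) subspace in cell $i$, so that $\mathbf{A}_i$ is $N_{\text{ac},i}\times M$ and $\mathbf{B}_i$ is $M\times N'_i$. The goal is to bound $\min\{N_{\text{ac},1},N_{\text{ac},2}\}$ above by $\lfloor M/2\rfloor$ and then invoke Theorem~1 for achievability.

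First I would extract two necessary feasibility conditions directly from the SIA structure. For SA, perfect signal alignment at AP $i$ demands $\mathbf{A}_i\mathbf{H}_{k,i}\mathbf{W}_{k,i}=\mathbf{I}_{N_{\text{ac},i}}$ for every $k$, which after the IA precoding reduces to $\mathbf{E}_{k,i}\mathbf{W}^{(c)}_{k,i}=\mathbf{I}_{N_{\text{ac},i}}$ with the reduced effective channel $\mathbf{E}_{k,i}=\mathbf{A}_i\mathbf{H}_{k,i}(\mathbf{G}_{k,i})^{-1}\mathbf{B}_i\in\mathbb{C}^{N_{\text{ac},i}\times N'_i}$. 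A right inverse $\mathbf{W}^{(c)}_{k,i}$ exists only if $\mathbf{E}_{k,i}$ has full row rank $N_{\text{ac},i}$; since $\operatorname{rank}(\mathbf{E}_{k,i})\le\min\{N_{\text{ac},i},N'_i\}$, this forces $N_{\text{ac},i}\le N'_i$ (generic channels guarantee the bound is attained, so it is tight). For IA, nulling the cell-$j$ interference at AP $i$ requires the $N_{\text{ac},i}$-dimensional row space of $\mathbf{A}_i$ to be orthogonal to the $N'_j$-dimensional column space of $\mathbf{B}_j$ inside $\mathbb{C}^M$, via $\mathbf{A}_i\mathbf{B}_j=\mathbf{0}$, whence $N_{\text{ac},i}+N'_j\le M$.

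I would then chain the two conditions across cells. The SA condition in cell $1$ gives $N'_1\ge N_{\text{ac},1}$, and substituting into the IA condition $N_{\text{ac},2}+N'_1\le M$ at AP $2$ yields $N_{\text{ac},1}+N_{\text{ac},2}\le N_{\text{ac},2}+N'_1\le M$. Hence $\min\{N_{\text{ac},1},N_{\text{ac},2}\}\le\tfrac12(N_{\text{ac},1}+N_{\text{ac},2})\le M/2$, so no partition can exceed $\lfloor M/2\rfloor$ interference-free streams in its weaker cell. The symmetric partition $N_{\text{ac},1}=N_{\text{ac},2}=\lfloor M/2\rfloor$, $N'_1=N'_2=\lceil M/2\rceil$ meets all four constraints with equality (for even $M$, $N_{\text{ac},i}=N'_i=M/2$) and is realizable by Theorem~1; it therefore attains the bound and maximizes the minimum per-cell AirComp DoF.

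The main obstacle is establishing that these inequalities are genuinely necessary for \emph{any} SIA design, not artifacts of the specific precoder in \eqref{eqn:precoder_summery}. The delicate point is the SA direction $N_{\text{ac},i}\le N'_i$: one must argue that no cleverer choice of $\mathbf{A}_i$ or $\mathbf{B}_i$ can make $\mathbf{E}_{k,i}$ right-invertible, which holds because $\operatorname{rank}(\mathbf{E}_{k,i})$ is capped by $\min\{N_{\text{ac},i},\operatorname{rank}(\mathbf{B}_i)\}$ irrespective of $\mathbf{A}_i$. A related subtlety is that a rank-deficient $\mathbf{B}_i$ could shrink the interference footprint and relax the IA bound, but it simultaneously tightens SA to $N_{\text{ac},i}\le\operatorname{rank}(\mathbf{B}_i)$; identifying $N'_i$ with $\operatorname{rank}(\mathbf{B}_i)$ shows the trade-off is already captured, leaving the combined bound $N_{\text{ac},1}+N_{\text{ac},2}\le M$ intact.
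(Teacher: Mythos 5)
Your proof is correct and arrives at the paper's conclusion, but by a noticeably more careful route. The paper's own proof is a three-line contradiction: for an unequal split $M_1+M_2=M$ common to both cells, the cascaded SA and IA precoder components have ranks $M_1$ and $M_2$, so the achievable DoF is capped at $\min(M_1,M_2)<\frac{M}{2}$ --- essentially a rank-of-a-product observation asserted without derivation. You instead give a direct converse-plus-achievability argument: you allow the two cells to use different partitions $(N_{\text{ac},i},N'_i)$, isolate the two structural necessities (right-invertibility of the $N_{\text{ac},i}\times N'_i$ effective channel $\mathbf{E}_{k,i}$ forces $N_{\text{ac},i}\le N'_i$, and the nulling condition $\mathbf{A}_i\mathbf{B}_j=\mathbf{0}$ forces $N_{\text{ac},i}+N'_j\le M$ by a dimension count in $\mathbb{C}^M$), chain them across cells to obtain $N_{\text{ac},1}+N_{\text{ac},2}\le M$, hence $\min_i N_{\text{ac},i}\le\lfloor M/2\rfloor$, and close with achievability via Theorem 1. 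What your version buys: it covers asymmetric per-cell partitions, which the paper's argument implicitly excludes from the search space; it makes explicit why no choice of $\mathbf{A}_i$ or $\mathbf{B}_i$ can evade the bound; and it closes the rank-deficient-$\mathbf{B}_i$ loophole by identifying $N'_i$ with $\operatorname{rank}(\mathbf{B}_i)$. What the paper's version buys is brevity, at the cost of leaning on the unproven claim that the DoF equals $\min(M_1,M_2)$. One caveat applies to both arguments equally: the optimality holds only within the structural class of SIA schemes (precoders factoring through $(\mathbf{G}_{k,i})^{-1}\mathbf{B}_i$ with linear nulling at the AP); neither rules out a design outside this class, but since the theorem statement confines itself to ``the SIA scheme,'' this is not a gap in your proposal.
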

\begin{proof}
Theorem 2 can be proved by contradiction. In the case of even $M$, consider the conjecture that unequal partitioning of channel space can achieve $N_{\text{ac}}^{\prime}$ AirComp DoF with $N_{\text{ac}}^{\prime}>N_{\text{ac}}=\frac{M}{2}$. Let $M_1$ be the dimensions of signal subspace and $M_2$ be the dimensions of interference subspace, where $M_1 \neq M_2$ and $M_1+M_2=M$. The spatial multiplexing of AirComp streams is supported by cascaded SA and IA precoders, that are of rank $M_1$ and $M_2$, respectively. This limits the maximum AirComp DoF as $N_{\text{ac}}^{\prime}=\min(M_1, M_2)<\frac{M}{2}$. This leads to a contradiction to the conjecture. Hence, the conjecture is proven false, and Theorem \ref{th:Optimal_symmetric} holds for even $M$. The proof for odd $M$ is similar and omitted for brevity.
\end{proof}

\section{Extension to AirComp in More Than Two Cells}
For two-cell uplink IA, the joint design of precoders and receive beamformers  ensures interference from devices in one cell is aligned at a \emph{single} AP in the other cell \cite{Tse2008Allerton,Tse2011TCOM}. The generalization of such schemes to the scenario of more than two cells is non-trivial as the joint design need to ensure simultaneous IA at \emph{multiple} APs in multiple neighbouring cells \cite{Lee2016TVT,Ma2012CommLetter}. Fortunately, for SIA for multi-cell AirComp, SA is required only for a \emph{single} (intended) AP while the extension of two-cell IA to multiple cells faces a similar challenge as in the literature and can be thus solved using existing techniques. In other words, the extension of the proposed two-cell SIA to the scenario of more than two cells is straightforward. For this reason and also due to the limited space, we do not pursue such generalization in this letter.

\section{Concluding Remarks}
In this letter, we have proposed a novel SIA scheme for managing interference in a two-cell AirComp systems, which finds applications to sensor networks and distributed learning in wireless networks. The principle of SIA is to symmetrically partition the spatial channel space for simultaneous signal and interference alignment. As a result, regardless of the number of devices, the required array size is only \emph{twice} of the spatially multiplexed AirComp streams.  This work opens a new direction of interference management for multi-cell AirComp. A series  of  issues that warrant further research include practical schemes for interference management, limited feedback, power control, and radio resource management.

\bibliographystyle{IEEEtran}
\bibliography{SIA}

\end{document}